\theoremstyle{thmstyleone}%
\newtheorem{theorem}{Theorem}%  meant for continuous numbers
\theoremstyle{thmstyletwo}%
\newtheorem{lemma}{Lemma}
\theoremstyle{thmstylethree}%
\begin{document}

\title[]{On the four-body limaçon choreography: maximal superintegrability and choreographic fragmentation}

%%=============================================================%%
%% GivenName	-> \fnm{Joergen W.}
%% Particle	-> \spfx{van der} -> surname prefix
%% FamilyName	-> \sur{Ploeg}
%% Suffix	-> \sfx{IV}
%% \author*[1,2]{\fnm{Joergen W.} \spfx{van der} \sur{Ploeg} 
%%  \sfx{IV}}\email{iauthor@gmail.com}
%%=============================================================%%

\author*[1]{\fnm{} \sur{Adrian M Escobar-Ruiz}}\email{admau@xanum.uam.mx}
\equalcont{These authors contributed equally to this work.}

\author[1]{\fnm{} \sur{Manuel Fernandez-Guasti}}\email{mfg@xanum.uam.mx}
\equalcont{These authors contributed equally to this work.}

\affil*[1]{\orgdiv{Departamento de Física}, \orgname{Universidad Autónoma Metropolitana - Iztapalapa}, \orgaddress{\street{Rafael
Atlixco 186}, \city{CDMX}, \postcode{09340}, \state{Ciudad de México}, \country{México}}}

%%==================================%%
%% Sample for unstructured abstract %%
%%==================================%%

\abstract{In this paper, as a continuation of [Fernandez-Guasti,  \textit{Celest Mech Dyn Astron} 137, 4 (2025)], we demonstrate the maximal superintegrability of the reduced Hamiltonian, which governs the four-body choreographic planar motion along the trisectrix limaçon (resembling a folded figure eight), in the six-dimensional space of relative motion. {The pairwise interaction potential $V(r_{ij})$ among the four bodies is a quadratic expression in the relative distances $r_{ij}$, with a combination of positive and negative coefficients}. The corresponding eleven integrals of motion in the Liouville-Arnold sense are presented explicitly. Specifically, it is shown that the reduced Hamiltonian admits complete separation of variables in Jacobi-like variables. The emergence of this choreography is not a direct consequence of maximal superintegrability. Rather, it originates from the existence of \textit{particular integrals} and the phenomenon of \textit{particular involution}. We also provide a detailed analysis of the fragmentation of a general four-body choreographic motion into two isomorphic two-body choreographies, as well as the reverse process, namely, the fusion of two-body choreographies into a four-body configuration. This model combines choreographic motion with maximal superintegrability, a seldom-studied interplay in classical mechanics.}

%%================================%%
%% Sample for structured abstract %%
%%================================%%

\keywords{four-body system, choreography, Hamiltonian system, superintegrability, dynamical fragmentation, first integrals, separation of variables.}

%%\pacs[JEL Classification]{D8, H51}

%%\pacs[MSC Classification]{35A01, 65L10, 65L12, 65L20, 65L70}

\maketitle

\section{Introduction}\label{sec1}

The classical $n$-body problem \cite{lectures_siegel_1971, solar_murray_1999} has long stood at the heart of celestial mechanics and mathematical physics, offering profound insights into the collective dynamics of interacting particles. This problem, which seeks to determine the motion of $n$ point masses under mutual interactions (typically Newtonian gravity), has served as a testbed for many fundamental ideas in dynamical systems, symmetry, and integrability. 
While the two-body problem is maximally superintegrable, it admits the maximum number of independent integrals of motion (three, in the planar case), and is solvable in closed form\footnote{Its solution can be expressed in closed form in terms of a suitable angular parameter (e.g., the true anomaly). As a function of time, closed-form solutions exist only in specific cases such as circular or rectilinear motion.}, the three-body problem already displays a rich tapestry of chaotic, resonant, and periodic behaviors. The four-body problem \cite{relative_sim_1978} remains an active area of research \cite{ restricted_scheeres_1998, finiteness_hampton_2006,transport_alvarezramrez_2015, classical_escobarruiz_2022}, where both analytical and numerical approaches continue to yield new insights.

\vspace{0.2cm}

One particularly fascinating class of solutions within the \( n \)-body framework is that of \textit{choreographies} \cite{davies1983classical,Chenciner2002}. These are periodic solutions where all bodies move along the same trajectory, with an equal phase delay in time. Since the discovery of the celebrated figure-eight solution in the planar three-body Newtonian problem \cite{braids_moore_1993, remarkable_chenciner_2000}, choreographic motions have attracted significant attention due to their mathematical elegance, symmetry properties, and potential applications in fields ranging from celestial mechanics to atomic and molecular dynamics.

\vspace{0.2cm}

Choreographic solutions typically emerge in systems with high symmetry and interactions depending only on relative distances~\cite{braids_moore_1993, remarkable_chenciner_2000, Toshiaki, Ozaki2009, Kapela2007, LOPEZVIEYRA20191711}. Most known examples have been obtained numerically through variational~\cite{chenciner2003action} or topological~\cite{Montgomery2002InfinitelyMS} techniques, while only a few allow exact analytic solutions. One such case is the planar three-body choreography discovered in~\cite{Toshiaki}, where the particles in a modified Newtonian gravity potential move along a lemniscate of Bernoulli; this lemniscate resembles the shape of the figure-eight Newtonian choreography. However, it remains unclear whether this system, characterized by an attractive 3-body Newtonian potential plus repulsive quadratic pairwise interactions, is (super)integrable in the Liouville–Arnold sense~\cite{Arnold:1989}. Later, the work \cite{TV2020} revealed that certain nontrivial \textit{integrals} of motion (distinct from the classical Liouville-Arnold integrals) can govern these analytical choreographies (see also \cite{TV2021}). These \textit{particular integrals} \cite{Escobar-Ruiz2024,Turbiner_2013} do not generate continuous symmetries. In contrast to generic integrability, which typically relies on global separation of variables, the existence of particular integrals enables the emergence of exact analytic solutions even in non-separable or chaotic systems \cite{Escobar-Ruiz2024}.

\vspace{0.2cm}

Another analytical example, central to our analysis, is the planar four-body choreography along a trisectrix lima\c{c}on curve~\cite{Fernandez-Guasti2025}, where each particle follows the same closed trajectory with a fixed time lag, and the entire configuration exhibits both spatial and temporal symmetry. In this case, the pairwise interactions are quadratic in the relative distances between particles and include both attractive and repulsive forces.

\vspace{0.2cm}

The present study examines the integrability properties of the four-body Hamiltonian introduced in~\cite{Fernandez-Guasti2025}, which governs the motion along the trisectrix limaçon. Our goal is to elucidate the interplay between choreographic motion and superintegrability within a concrete, analytically tractable model. We emphasize that, although superintegrable systems such as the Calogero–Moser model~\cite{Calogero, MOSER1975197} exhibit periodic behavior reminiscent of choreographies, they often fail to satisfy the strict condition of identical trajectories with equal phase shifts. Even the two-body Kepler problem—despite being maximally superintegrable—does not support choreographic motion. This highlights the relevance of the present example, where such motion occurs in a system that is not only analytically solvable but also maximally superintegrable.

\vspace{0.2cm}

The main contributions of this work are threefold. First, we show that the reduced Hamiltonian—after separating the center-of-mass motion—is maximally superintegrable in the Liouville-Arnold sense~\cite{Arnold:1989}, possessing eleven independent integrals of motion in a twelve-dimensional phase space. This structure guarantees that all bounded trajectories are closed~\cite{nekhoroshev1972action}. Second, we demonstrate that the limaçon choreography is sustained by particular integrals and the phenomenon of \textit{particular involution}, in which some global integrals Poisson-commute only along the special solution. Third, we describe a fragmentation mechanism wherein the four-body choreography decomposes into two separate two-body choreographies when these particular integrals are perturbed and cease to be conserved. The inverse process, involving the fusion of two-body choreographies into a four-body configuration, is also examined.

\vspace{0.2cm}

The paper is organized as follows. Section~\ref{sec2} introduces the Hamiltonian system and establishes notation, presenting general analytic solutions and a classification of trajectories. In Section~\ref{sec3}, we perform a full separation of variables and reveal the anisotropic oscillator structure of the reduced Hamiltonian ${\cal H}_{\rm rel}$. Section~\ref{relmotionsup} establishes its maximal superintegrability and constructs eleven algebraically independent integrals (six of them in involution). Section~\ref{sec4} focuses on the limaçon choreography and its relation to particular integrals and particular involution. Section~\ref{sec5} analyzes the fragmentation into two two-body choreographies. Conclusions and future directions are given in Section~\ref{sec6}. 

\section{Model Definition and Setup}\label{sec2}

Consider a classical system of four particles, moving in the Euclidean plane $\mathbb{R}^2$, with equal masses ($m_1=m_2=m_3=m_4=m$) and subject to a quadratic pairwise interaction potential. The corresponding Hamiltonian is of the form:
\begin{equation}
\label{Ha}
{\cal H} \ = \ \frac{1}{2\,m}(\, \mathbf{p}_1^2 \,+\,\mathbf{p}_2^2 \,+\,\mathbf{p}_3^2 \,+\,\mathbf{p}_4^2 \,) \ + \ V(r_{ij}) \ ,
\end{equation}
$\mathbf {p}_i$ is the canonical momentum of particle $i=1,2,3,4$, the potential is
\begin{equation}
\label{Vp}
    V(r_{ij}) \ = \ \frac{1}{4}\,m\,\omega^2\,(\, 2\,r_{12}^2 \,+\, 2\,r_{23}^2 \,+\, 2\,r_{34}^2\,+\, 2\,r_{14}^2\,-\, r_{13}^2 \,-\, r_{24}^2 \,) \ ,
\end{equation}
here $\omega$ is an angular frequency, and
\[ \mathbf{r}_{ij}\ = \  \mathbf{r}_i\,-\,\mathbf{r}_j\ ,
\]
$\mathbf{r}_i \in \mathbb{R}^2$ denotes the individual vector position of particle $i=1,2,3,4$, thus $r_{ij}=|\mathbf{r}_{ij}|$ is the mutual relative distance between bodies $i$ and $j$. The phase space is 16-dimensional. In (\ref{Vp}), the interaction between adjacent particles is attractive with equal strength $\frac{1}{2}\,m\,\omega^2$, and the interaction between alternating particles is repulsive with half that strength.

The potential can be seen as arising from the sum of two linear force contributions, an attractive force between all four bodies and a linear force that is repulsive between alternate numbered bodies and attractive between adjacent ones. This latter force may be viewed as an electrostatic type force but linearly dependent on distance, where odd numbered bodies have the same charge sign but opposite to the even numbered bodies. Further details can be found in~\cite{Fernandez-Guasti2025}.

The Hamiltonian (\ref{Ha}) is rotationally invariant, i.e., it exhibits $SO(2)$ symmetry associated with the conservation of the total angular momentum ${\cal L}= \sum_{i=1}^{4} {\mathbf r}_i \times {\mathbf p}_i$. Because of the spatial translational invariance, the center-of-mass coordinate can be completely separated. This system (\ref{Ha}) possesses a discrete permutation symmetry ${\cal S}_2(13)\otimes {\cal S}_2(24)$, the particle pairs $1,3$ and $2,4$ play a distinctive role in the system's dynamics, and formally the full reflection symmetry $\mathbb{Z}_2^6$ $( r_{ij} \rightarrow - r_{ij})$.  Additionally, it exhibits time-reversal invariance.

In~\cite{Fernandez-Guasti2025} it was shown that (\ref{Ha}) admits an analytic four-body choreographic solution along a trisectrix limax, a curve resembling a folded figure eight. Here we focus on the integrability properties of (\ref{Ha}).

\subsection{General solutions to the equations of motion}

{The Hamiltonian with potential (\ref{Vp}) written in terms of the individual particle positions \(\mathbf{r}_{i}\) does not admit a separation of variables in this form. Nevertheless, Hamilton's equations of motion remain integrable directly in the \(\mathbf{r}_{i}\) coordinates. This is so because the interaction potential is a purely quadratic function of the relative distances (with no linear terms). Thus, by invoking the Principal Axis Theorem \cite{horn2012matrix} the system can be diagonalized into independent normal modes \cite{Landau1976Mechanics}, leading to a decoupled formulation of the dynamics. This normal-mode representation, which provides additional insight into the collective motion of the system, will be described in detail in Section~\ref{sec3}. Hence, the equations
} 
\begin{equation}
\label{eqm}
  \frac{d\mathbf{r}_i }{dt}\ = \ \frac{\partial{\cal H}}{\partial\mathbf{p}_i}   \qquad ; \qquad \frac{d\mathbf{p}_i }{dt} \ = \ -\,\frac{\partial{\cal H}}{\partial\mathbf{r}_i} \qquad \ ; \qquad (i=1,2,3,4)\ ,
\end{equation}
can be solved exactly in terms of elementary trigonometric functions. In Cartesian coordinates, $\mathbf{p}_{i}(t)=m\,\frac{d\mathbf{r}_{i}}{dt}$. For future reference, we introduce the notation $\mathbf{p}_{ij}(t) \equiv \mathbf{p}_{i}(t) - \mathbf{p}_{j}(t)$ to denote the relative momentum between particles $i$ and $j$.

\subsubsection{Classification of trajectories}

The general solutions to (\ref{eqm}) are described in the following Theorem.

\vspace{-0.5cm}

\begin{theorem}
In the center-of-mass frame, the most general motion of the system (\ref{Ha}), with the linear forces allowing both attraction and repulsion, is given by two isomorphic 2-body choreographies.
\end{theorem}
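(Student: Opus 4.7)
The plan is to recast the Hamiltonian in normal-mode coordinates adapted to the $\mathcal{S}_2(13)\otimes\mathcal{S}_2(24)$ permutation symmetry, solve the resulting uncoupled oscillator equations, and then read the choreographic half-period shift off an exact $2{:}1$ frequency resonance. Accordingly, I would first introduce the Jacobi-like variables $\mathbf{R}=\tfrac{1}{4}\sum_i\mathbf{r}_i$, $\mathbf{W}=\tfrac12(\mathbf{r}_1+\mathbf{r}_3)-\tfrac12(\mathbf{r}_2+\mathbf{r}_4)$, $\mathbf{u}=\mathbf{r}_1-\mathbf{r}_3$ and $\mathbf{v}=\mathbf{r}_2-\mathbf{r}_4$, and rewrite the six pairwise distances. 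Using the identities $r_{12}^2+r_{34}^2=2W^2+\tfrac12(\mathbf{u}-\mathbf{v})^2$, $r_{14}^2+r_{23}^2=2W^2+\tfrac12(\mathbf{u}+\mathbf{v})^2$ and $r_{13}^2+r_{24}^2=u^2+v^2$, the cross terms in (\ref{Vp}) cancel and the potential reduces to the diagonal form $V=2m\omega^2\,W^2+\tfrac14 m\omega^2(u^2+v^2)$, while the kinetic energy in the center-of-mass frame becomes $T=\tfrac{m}{2}\dot{W}^2+\tfrac{m}{4}(\dot{u}^2+\dot{v}^2)$. The reduced Hamiltonian thus splits into three independent planar isotropic oscillators with $\omega_{W}=2\omega$ and $\omega_{u}=\omega_{v}=\omega$, the crucial $2{:}1$ ratio being a direct consequence of the particular balance of attractive ($+2$) and repulsive ($-1$) coefficients in (\ref{Vp}).

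Second, I would solve these decoupled equations, obtaining for $\mathbf{W}(t)$ an ellipse at frequency $2\omega$ and for $\mathbf{u}(t),\mathbf{v}(t)$ independent ellipses at frequency $\omega$. Inverting the coordinate change gives the particle positions
\[
\mathbf{r}_1=\tfrac12(\mathbf{W}+\mathbf{u}),\quad \mathbf{r}_3=\tfrac12(\mathbf{W}-\mathbf{u}),\quad \mathbf{r}_2=\tfrac12(-\mathbf{W}+\mathbf{v}),\quad \mathbf{r}_4=\tfrac12(-\mathbf{W}-\mathbf{v}),
\]
and with $T=2\pi/\omega$ the resonance enforces $\mathbf{W}(t+T/2)=\mathbf{W}(t)$, $\mathbf{u}(t+T/2)=-\mathbf{u}(t)$ and $\mathbf{v}(t+T/2)=-\mathbf{v}(t)$. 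Substituting yields the two choreographic relations $\mathbf{r}_3(t)=\mathbf{r}_1(t+T/2)$ and $\mathbf{r}_4(t)=\mathbf{r}_2(t+T/2)$, so that each of the pairs $(1,3)$ and $(2,4)$ traces a common closed curve with the canonical half-period time lag: each pair is a $2$-body choreography.

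Third, to exhibit the isomorphism I would observe that the trajectory of pair $(1,3)$ is a Lissajous $1{:}2$ curve built from $\mathbf{W}$ and $\mathbf{u}$, while the trajectory of pair $(2,4)$ is the same type of curve built from $-\mathbf{W}$ and $\mathbf{v}$. Both sub-motions carry identical dynamical content---a planar anisotropic oscillator of frequencies $(2\omega,\omega)$---and the involution $\mathbf{W}\mapsto-\mathbf{W}$, $\mathbf{u}\leftrightarrow\mathbf{v}$ conjugates one to the other; the two $2$-body choreographies therefore share the same dynamical structure and differ only through the independent initial data of their intra-pair modes.

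The main obstacle is identifying at the outset the correct normal-mode basis that simultaneously diagonalizes both quadratic forms; this is the step where the $\mathcal{S}_2(13)\otimes\mathcal{S}_2(24)$ symmetry must be used essentially, since a naive choice (e.g.\ ordinary Jacobi coordinates indexed $1{-}2{-}3{-}4$) fails to decouple the potential. Beyond that, the argument reduces to the observation that the specific sign pattern in (\ref{Vp}) produces an exact $2{:}1$ resonance: any detuning---for instance flipping the repulsive $-1$ to $+1$---would preserve integrability but immediately destroy the phase-shift closure underlying the choreographic decomposition. Recognising this resonance as the geometric origin of the decomposition is, in my view, the heart of the proof.
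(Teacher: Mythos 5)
Your proposal is correct and follows essentially the same route as the paper: your normal modes $\mathbf{u}=\mathbf{r}_{13}$, $\mathbf{v}=\mathbf{r}_{24}$ and $\mathbf{W}=\mathbf{r}^+_{13}=-\mathbf{r}^+_{24}$ are, up to normalization and sign, exactly the paper's separating variables $\mathbf{U}_2,\mathbf{U}_1,\mathbf{U}_3$ with frequencies $(\omega,\omega,2\omega)$, and the explicit solution (\ref{trajec}) with $\mathbf{r}_3(t)=\mathbf{r}_1(t\pm 2\tau)$, $2\tau=\pi/\omega$, is precisely your half-period shift read off the $2{:}1$ resonance. The only organizational difference is that the paper states the integrated solution directly in the proof and defers the normal-mode diagonalization to Section~\ref{sec3}, whereas you derive it up front; the mathematical content is identical.
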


\vspace{-0.5cm}

\begin{proof}

In the center-of-mass frame where
\begin{equation}
\mathbf{r}_1+\mathbf{r}_2+\mathbf{r}_3+\mathbf{r}_4\, = \, 0 \quad \text{and} \qquad  \mathbf{p}_1+\mathbf{p}_2+\mathbf{p}_3+\mathbf{p}_4\, = \, 0 \ ,    
\end{equation}
by direct integration of~(\ref{eqm})
the general solutions, in Cartesian coordinates, are 
\begin{equation}
\begin{aligned}
\label{trajec}
& \mathbf{r}_1(t)    =   \frac{1}{2}\,\big( \,\mathbf{r}_{13}(0)\,\cos{\omega\,t}   +  \mathbf{r}_{13}^+(0)\,\cos{2\,\omega\,t}\,\big)
 \, + \,   \frac{1}{4\,m\,\omega }\,\big(\,2\,\mathbf{p}_{13}(0)\,\sin{\omega\,t}  +  \mathbf{p}_{13}^+(0)\,\sin{2\,\omega\,t}\,\big)
\\ &
\mathbf{r}_2(t) =   \frac{1}{2}\,\big(\, \mathbf{r}_{24}(0)\,\cos{\omega\,t}   +  \mathbf{r}_{24}^+(0)\,\cos{2\,\omega\,t}\,\big)
 \, + \,   \frac{1}{4\,m\,\omega }\,\big(\,2\,\mathbf{p}_{24}(0)\,\sin{\omega\,t}  +  \mathbf{p}_{24}^+(0)\,\sin{2\,\omega\,t}\,\big)
 \\ &
 \mathbf{r}_3(t) \ = \ \mathbf{r}_1(t \pm 2\,\tau) \qquad ; \qquad 
 \mathbf{r}_4(t) \ = \ \mathbf{r}_2(t \pm 2\,\tau) \ ,
\end{aligned} 
\end{equation}
with time delay $\tau = \frac{\pi}{2\,\omega}$, and
\begin{equation}
\mathbf{r}_{ij}^+(t) \ \equiv  \ \mathbf{r}_{i}(t) \, + \, \mathbf{r}_{j}(t) \quad , \qquad  \mathbf{p}_{ij}^+(t) \ \equiv \ \mathbf{p}_{i}(t) \ + \ \mathbf{p}_{j}(t) \ ,
\end{equation}

\noindent here individual quantities $\mathbf{r}_{i}(0)$, $\mathbf{p}_{i}(0)$, and relative ones $\mathbf{r}_{ij}(0)$ and $\mathbf{p}_{ij}(0)$ correspond to the initial conditions at $t=0$. 
In the center-of-mass frame, the relations $\mathbf{r}_{13}^+(t) = -\mathbf{r}_{24}^+(t)$ and $\mathbf{p}_{13}^+(t) = -\mathbf{p}_{24}^+(t)$ hold. 
\end{proof}

\noindent Effectively, the phase space associated with the relative motion is 12-dimensional. Moreover, the time evolution of the relative vectors $\mathbf{r}_{13}$ and $\mathbf{r}_{24}$ reads
\begin{equation}
\begin{aligned}
& \mathbf{r}_{13}(t) \ = \   \mathbf{r}_{13}(0)\,\cos{\omega\,t} \ + \ \frac{1}{m\,\omega}\,\mathbf{p}_{13}(0)\,\sin{\omega\,t} 
\\ &
\mathbf{r}_{24}(t) \ = \   \mathbf{r}_{24}(0)\,\cos{\omega\,t} \ + \ \frac{1}{m\,\omega}\,\mathbf{p}_{24}(0)\,\sin{\omega\,t} \ .
\label{r13r24}
 \end{aligned}
\end{equation}

\noindent Hence, for an arbitrary set of initial conditions, the motion exhibits two isomorphic 2-body choreographies, as evinced by  (\ref{trajec}). In this typical scenario, with the linear forces allowing both attraction and repulsion, particle pair $1,3$ and pair $2,4$ each follow their own periodic orbit. These two orbits have the same general functional form but usually different parameters (hence, different shapes), see Fig.\ref{F1}. In this generic case, both relative distances $r_{13}(t)$ and $r_{24}(t)$ vary in time (neither remains constant).

\vspace{0.15cm}

\noindent We emphasize that neither of the two pairs forming a 2-body choreography can decay into more elementary motions, such as particles 1 and 3 following distinct trajectories. Such a decay is prohibited—protected by symmetry. Consequently, these configurations are inherently stable.

\vspace{0.15cm}

\noindent Depending on the initial conditions, {we can distinguish several classes of orbits}:

\begin{itemize}
        
\item \textbf{(I)} One 2-body pair rigid, one 2-body pair non-rigid: Suppose the initial conditions satisfy
\begin{equation}
    \label{iniconC0}
 {p}_{13}(0)  \ = \ m\,\omega\,{r}_{13}(0) \qquad ; \qquad  \mathbf{r}_{13}(0) \,\perp\, \mathbf{p}_{13}(0) \ .
    \end{equation}
In this case, $r_{13}(t)$ stays constant for all $t$, equal to its initial value (see (\ref{r13r24})), while the other pair’s separation $r_{24}(t)$ is still time-dependent (non-constant). By the symmetry ${\cal S}_2(13)\otimes {\cal S}_2(24)$, the opposite situation is also possible, i.e. $r_{24}(t)$ remains constant and $r_{13}(t)$ varies.

\vspace{0.4cm}

\item \textbf{(II)} Two rigid 2-body choreographies (distinct radii): If we impose the initial conditions (\ref{iniconC0}) on both pairs simultaneously, namely
    \begin{equation}
    \label{iniconC1}
    \begin{aligned}
      &  {p}_{13}(0)  \ = \ m\,\omega\,{r}_{13}(0) \qquad ; \qquad  \mathbf{r}_{13}(0) \,\perp\, \mathbf{p}_{13}(0) 
    \\ &
        {p}_{24}(0)  \ = \ m\,\omega\,{r}_{24}(0) \qquad ; \qquad  \mathbf{r}_{24}(0) \,\perp\,\mathbf{p}_{24}(0)\ ,
        \end{aligned}
    \end{equation}
then both $r_{13}$ and $r_{24}$ remain constant in time: 
\[
r_{13}(t)\  = \  r_{13}(0)  \qquad , \qquad  r_{24}(t) \ = \ r_{24}(0) \ ,
\]
    
\noindent for all $t$. In this situation, each two-body motion is \textit{rigid} (illustrated in Fig.\ref{F2}), although generally the radii differ (so $r_{13}(t) \neq r_{24}(t)$ in this case).

\vspace{0.2cm}   

\item \textbf{(III)} In this class of trajectories, the existence of four-body choreographic solutions, with the two pairs' motions intertwined, is established in the following Lemma:
\begin{lemma}\label{lem1}
\textit{The necessary and sufficient conditions to obtain a four-body choreography from the general solutions (\ref{trajec}) are:}
\begin{equation}
m\,\omega\,\mathbf{r}_{13}(0)\ =\ \pm\,\mathbf{p}_{24}(0),\qquad m\,\omega\,\mathbf{r}_{24}(0)\ =\ \mp\,\mathbf{p}_{13}(0)\ .
\label{con4brt}
\end{equation}
\end{lemma}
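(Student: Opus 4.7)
A four-body choreography requires all four particles to traverse a single common orbit $\mathbf{q}(t)$ with equal phase lags, i.e.\ $\mathbf{r}_k(t)=\mathbf{q}\!\left(t+(k-1)\tau_0\right)$ for some $\tau_0$ equal to one quarter of the period. The general solution (\ref{trajec}) already encodes $\mathbf{r}_3(t)=\mathbf{r}_1(t\pm 2\tau)$ and $\mathbf{r}_4(t)=\mathbf{r}_2(t\pm 2\tau)$ with $\tau=\pi/(2\omega)$, so half of the cyclic shift pattern is built in for free. My strategy is to show that the only remaining ingredient is the inter-pair relation
\[
\mathbf{r}_2(t)\ =\ \mathbf{r}_1(t\pm\tau),
\]
and then read off (\ref{con4brt}) from this single functional identity.

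To carry this out I would substitute $t\mapsto t\pm\tau$ in the expression for $\mathbf{r}_1(t)$ in (\ref{trajec}), using
\[
\cos\omega(t\pm\tau)=\mp\sin\omega t,\ \ \sin\omega(t\pm\tau)=\pm\cos\omega t,\ \ \cos 2\omega(t\pm\tau)=-\cos 2\omega t,\ \ \sin 2\omega(t\pm\tau)=-\sin 2\omega t,
\]
and set the result equal to $\mathbf{r}_2(t)$. Because $\{\cos\omega t,\sin\omega t,\cos 2\omega t,\sin 2\omega t\}$ are linearly independent as functions of $t$, matching coefficients turns this into four vector equations for the initial data.

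Two of these equations, coming from the $2\omega$-modes, reduce to $\mathbf{r}_{24}^+(0)=-\mathbf{r}_{13}^+(0)$ and $\mathbf{p}_{24}^+(0)=-\mathbf{p}_{13}^+(0)$. Both are automatic in the centre-of-mass frame (they are merely the relations already noted below Theorem~1), so they impose no new constraint. The non-trivial content lives entirely in the two $\omega$-mode equations, which collapse precisely to $m\omega\,\mathbf{r}_{24}(0)=\pm\mathbf{p}_{13}(0)$ and $m\omega\,\mathbf{r}_{13}(0)=\mp\mathbf{p}_{24}(0)$, with the sign correlation dictated by whether one chooses the time shift $+\tau$ or $-\tau$; this is exactly (\ref{con4brt}). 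Necessity is immediate; sufficiency follows by running the same Fourier matching in reverse, so that assuming (\ref{con4brt}) the identity $\mathbf{r}_2(t)=\mathbf{r}_1(t\pm\tau)$ holds for all $t$, whence chaining with the built-in relations gives $\mathbf{r}_k(t)=\mathbf{r}_1\!\left(t+(k-1)(\pm\tau)\right)$ for all $k$, i.e.\ a genuine four-body choreography.

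The only real obstacle is bookkeeping: one must track the sign choice $\pm\tau$ consistently with the $\pm/\mp$ pairing in (\ref{con4brt}) so as to recover both admissible orientations of the choreography (the curve traversed with opposite sense of circulation) without introducing spurious extra solutions. Everything else is linear algebra on the Fourier coefficients.
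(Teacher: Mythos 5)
Your proposal is correct and follows essentially the same route as the paper: impose $\mathbf{r}_2(t)=\mathbf{r}_1(t\pm\tau)$, shift the argument in (\ref{trajec}), match the linearly independent trigonometric components, note that the $2\omega$-mode equations hold automatically in the centre-of-mass frame, and extract (\ref{con4brt}) from the $\omega$-mode balance, with the $\pm$ ambiguity traced to the choice of time orientation (time-reversal invariance in the paper's phrasing).
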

\begin{proof}
The four-body choreographic motion requires the condition  
\begin{equation}
\mathbf{r}_{2}(t) \ = \  \mathbf{r}_{1}(t + \tau), \qquad \tau = \frac{\pi}{2\,\omega}.
\end{equation}
To verify this, we evaluate $\mathbf{r}_{1}(t + \tau)$ using equation~(\ref{trajec}), which yields:
\begin{equation}
\mathbf{r}_{2}(t) = \frac{1}{2} \left(-\mathbf{r}_{13}(0)\, \sin{\omega t} - \mathbf{r}_{13}^+(0)\, \cos{2\omega t} \right) \\
+ \frac{1}{4m\omega} \left(2\,\mathbf{p}_{13}(0)\, \cos{\omega t} - \mathbf{p}_{13}^+(0)\, \sin{2\omega t} \right)\ , \nonumber
\end{equation}
to be compared with the explicit form of $\mathbf{r}_2(t)$ given in equation~(\ref{trajec}).  
Since the equality must hold for all times $t$, each orthogonal trigonometric component must match independently. The components involving $\cos(2\omega t)$ and $\sin(2\omega t)$ automatically coincide due to the center-of-mass condition. The remaining two terms yield the following constraints:
\begin{subequations}
\begin{align}
& \mathbf{r}_{24}(0) \  = \  \frac{\mathbf{p}_{13}(0)}{m\,\omega}, \qquad \text{(cosine balance)} 
\label{eq:bal4-cost}
\\ &
\frac{\mathbf{p}_{24}(0)}{m\,\omega} \  = \  -\mathbf{r}_{13}(0), \qquad \text{(sine balance)} 
\label{eq:bal4-sint}
\end{align}
\end{subequations}

The remaining particles $3$ and $4$ follow similarly time-shifted trajectories:
\begin{equation}
\mathbf{r}_{3}(t) = \mathbf{r}_{1}(t + 2\tau), \qquad \mathbf{r}_{4}(t) = \mathbf{r}_{2}(t + 2\,\tau) = \mathbf{r}_{1}(t + 3\,\tau),
\end{equation}
as defined by equation~(\ref{trajec}). Together, these satisfy the temporal symmetry required to complete the four-body choreographic motion. The $\pm$ signs in (\ref{con4brt}) arise from time-reversal invariance.
\end{proof}
Again, both $r_{13}(t)$ and $r_{24}(t)$ vary periodically in a coordinated way (neither distance is constant), as shown in  Fig.\ref{F4}.  Eventually, from the 12 \textit{free} initial conditions in the space of relative motion, the 4-body choreography, due to lemma
1, has a maximum of eight possibly different constants that can be
independently set. The parametric representation of the common path followed
by the four bodies is
\begin{multline}
\mathbf{r}\left(t\right)=\frac{1}{2}\big(\mathbf{r}_{13}\!\left(0\right)\cos\left(\omega t\right)+\mathbf{r}_{13}^{+}\!\left(0\right)\cos\left(2\omega t\right)\big)\\
+\frac{1}{2\,m\,\omega}\left(\mathbf{p}_{13}\!\left(0\right)\sin\left(\omega t\right)+\frac{1}{2}\mathbf{p}_{13}^{+}\!\left(0\right)\sin\left(2\omega t\right)\right)\ .\label{eq:4body_orbit}
\end{multline}

The 4-body choreography is also stable. Small deviations in the initial conditions (keeping the center of mass at rest) lead to a fragmentation into two 2-body choreographies, each following trajectories that closely resemble the original, unperturbed 4-body motion. { It is also worth mentioning that the
particular values of the coefficients in the potential (\ref{Vp}), play a fundamental role in determining whether choreographic motion can arise and persist.\\}

\vspace{0.4cm}

\item \textbf{(IV)} Symmetric four-body choreography: Finally, by further imposing the conditions (\ref{iniconC1}) on the previous case (III)
ensures that the relative distances  $r_{13}(t),\, r_{24}(t)$ become both constant and equal. For instance, by selecting the initial values
{\small
\begin{equation}
    \label{iniconC3}
    \begin{aligned}
   &   
   \mathbf{r}_{1}(0) \,= \,(1,\,0) \quad ,\quad \mathbf{r}_{2}(0) \,= \,\frac{1}{2}(-1,\,1)\quad ,\quad \mathbf{r}_{3}(0) \,= \,(0,0) \quad ,\quad \mathbf{r}_{4}(0) \,= \,-\frac{1}{2}(1,\,1)
   \\ &
   \mathbf{p}_{1}(0) \,= \,(0,\,\frac{3\,m\,\omega}2) \quad ,\quad \mathbf{p}_{2}(0) \,= \,-m\,\omega\,(\frac{1}{2},\,1)\quad ,\quad \mathbf{p}_{3}(0) \,= \,(0,\,\frac{m\,\omega}2) 
   \\ &
   \mathbf{p}_{4}(0) \,= \,m\,\omega\,(\frac{1}{2},\,-1) \ ,
    \end{aligned}
    \end{equation}
}

    or equivalently,

{\small
\begin{equation}
    \label{iniconC4}
    \begin{aligned}
   &   
   \mathbf{r}_{13}(0) \,= \,(1,\,0) \quad ,\quad \mathbf{r}_{24}(0) \,= \,(0,\,1)\quad ,\quad \mathbf{r}^+_{13}(0) \,= \,(1,0) \quad ,\quad \mathbf{r}^+_{24}(0) \,= \,(-1,\,0)
   \\ &
   \mathbf{p}_{13}(0) \,= \,m\,\omega\,(0,\,1) \quad ,\quad \mathbf{p}_{24}(0) \,= \,m\,\omega\,(-1,\,0)\quad ,\quad \mathbf{p}^+_{13}(0) \,= \,m\,\omega\,(0,\,2) 
   \\ &
   \mathbf{p}^+_{24}(0) \,= \,m\,\omega\,(0,\,-2) \ ,
    \end{aligned}
    \end{equation}
}
\vspace{0.2cm}

the 4-body trisectrix limaçon choreography~\cite{Fernandez-Guasti2025} is obtained
\begin{equation}
\label{trilimax}
\mathbf{r}(t)\ = \  \frac{1}{2}\big(\cos(\omega t) + \cos(2\omega t)\big)\,\hat{\mathbf{e}}_{x} + \frac{1}{2}\big(\sin(\omega t) + \sin(2\omega t)\big)\,\hat{\mathbf{e}}_{y}\ .
\end{equation}
In this case $r_{13}(t)\, =\, r_{24}(t)=1$, see Fig. \ref{F3}.

\vspace{0.2cm}

\item \textbf{(V)} General four-body choreography. The solution when the four bodies follow the same trajectory is characterized by eight independent constants. The orbits can be very different depending on the values of these parameters. Some of the orbits resemble trifolium roses, lemniscates and other less symmetric curves. {They} can be constructed through the superposition of particular solutions, owing to the linearity of the equations of motion. Starting from the trisectrix limaçon solution (\ref{trilimax}), scaling the trigonometric components individually leads to four degrees of freedom.  Altogether, these can be reparametrized as independent coefficients in the form
\begin{equation}
\mathbf{r}(t) = \frac{1}{2}\big(a\cos(\omega t) + b\cos(2\omega t)\big)\,\hat{\mathbf{e}}_{x} + \frac{1}{2}\big(c\sin(\omega t) + d\sin(2\omega t)\big)\,\hat{\mathbf{e}}_{y}.
\end{equation}
A distinct, but equally valid, solution is obtained by exchanging the trigonometric functions associated with the basis vectors
\begin{equation}
\mathbf{r}(t) = \frac{1}{2}\big(c'\sin(\omega t) + d'\sin(2\omega t)\big)\,\hat{\mathbf{e}}_{x} + \frac{1}{2}\big(a'\cos(\omega t) + b'\cos(2\omega t)\big)\,\hat{\mathbf{e}}_{y}\ .
\end{equation}
The general solution emerges from the linear superposition of these two expressions, thus yielding a full set of eight independent parameters. Strictly speaking, one could also include time-reversed solutions via the symmetry transformation \(t \to -t\); however, their superposition leads to counter-propagating trajectories that generically result in particle collisions.

\end{itemize}

These results can be further
generalized in two ways, on the one hand, an $n>4$ number of bodies
can be considered. On the other, the coefficients in the arguments of the trigonometric functions, here specified as 1 and 2 can be generalized to $\Gamma_{x}$ and $\Gamma_{y}$ in $\mathbb{Z}$. These integer values ensure that the orbits will be periodic. The force strength coefficients $\kappa_{ij}$, defined by the harmonic
type potential $V\left(r_{ij}\right)=\frac{1}{2}\sum\kappa_{ij}r_{ij}^{2}$,
establish the values of $\Gamma_{x},\Gamma_{y}$. The problem can also be tackled
the other way around, where $\kappa_{ij}$ coefficients are adjusted
depending on the given $\Gamma_{x},\Gamma_{y}$ constants. Not all $n,\Gamma_{x},\Gamma_{y}$ possibilities, together with the initial positions and momenta constants, satisfy Hamilton's equations of motion. Furthermore, even if the equations of motion are satisfied, collisions may occur in some of these configurations. Nonetheless, there is a vast set of values where choreographies can be  obtained.

\begin{figure}[h]
\begin{center}
\includegraphics[width=10.5cm]{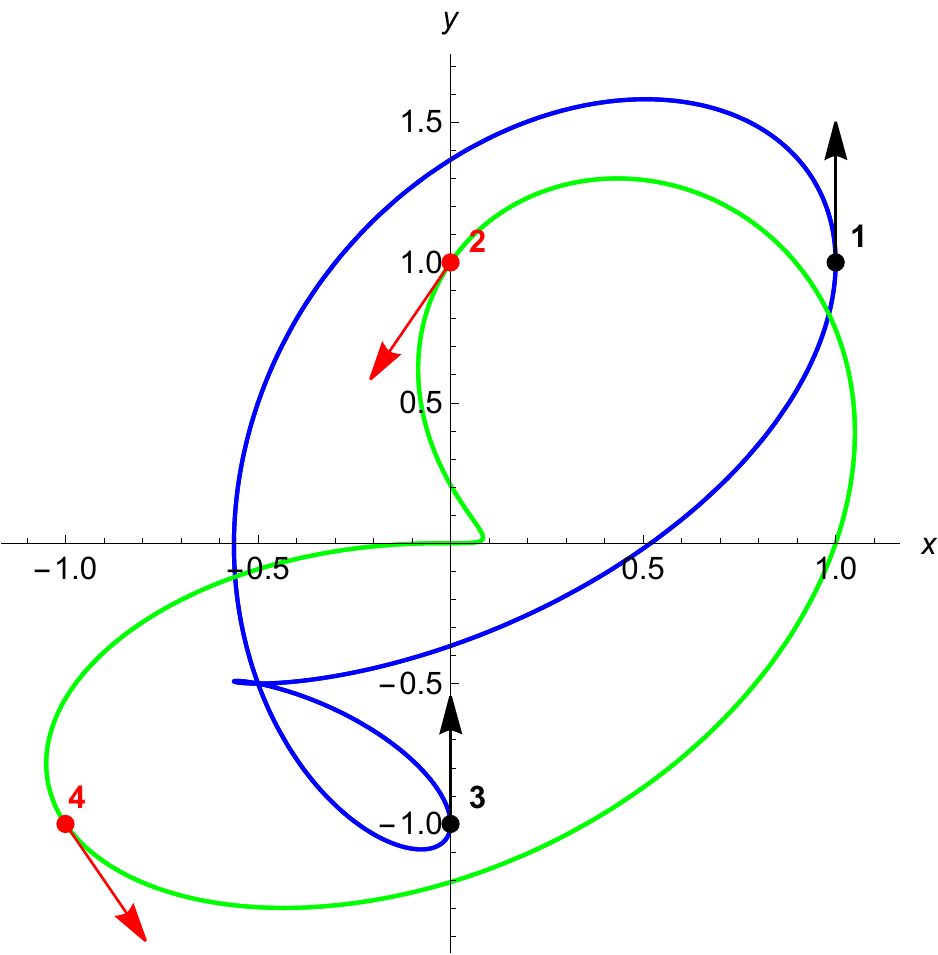}
	\caption{\textbf{Generic case}. Two 2-body choreographies with non-constant relative distances $r_{13}(t)$ and $r_{24}(t)$. Particles 1 and 3 (shown in black) share a common trajectory (blue curve), while particles 2 and 4 (shown in red) follow a distinct but identical path (green curve).
    Here $m=1$, $\omega=1$ and the initial conditions are given by ${\mathbf r}_1=(1,1)$, ${\mathbf r}_2=(0,1)$, ${\mathbf r}_3=(0,-1)$, ${\mathbf r}_4=(-1,-1)$, ${\mathbf p}_1=(0,1.5)$, ${\mathbf p}_2=(-0.5,-1)$, ${\mathbf p}_3=(0,0.5)$, ${\mathbf p}_4=(0.5,-1)$. The arrows representing the momentum vectors indicate direction only; their lengths are not to scale and are intended solely as a guide to the eye.}
    \label{F1}
\end{center}
\end{figure}  

%\vspace{-1.5cm}

\begin{figure}[h]
\begin{center}
\includegraphics[width=11.5cm]{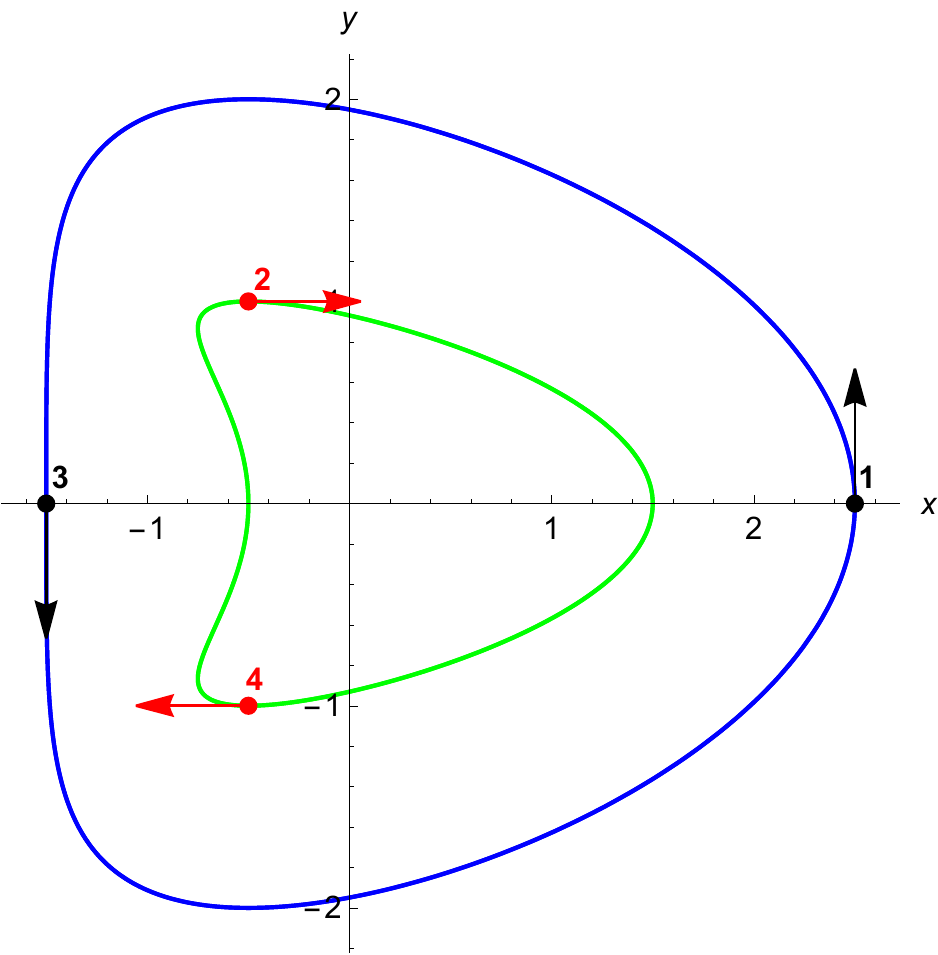}
\caption{\textbf{Class (II)}. Two rigid 2-body choreographies (distinct radii): two distinct two-body choreographies are presented, characterized by constant but different relative distances: $r_{13}(t)=4$ and $r_{24}(t)=2$. Particles 1 and 3 (shown in black) share a common trajectory (blue curve), while particles 2 and 4 (shown in red) follow a distinct but identical path (green curve). Here $m=1$, $\omega=1$ and the initial conditions are given by ${\mathbf r}_1=(2.5,0)$, ${\mathbf r}_2=(-0.5,1)$, ${\mathbf r}_3=(-1.5,0)$, ${\mathbf r}_4=(-0.5,-1)$, ${\mathbf p}_1=(0,2)$, ${\mathbf p}_2=(1,0)$, ${\mathbf p}_3=(0,-2)$, ${\mathbf p}_4=(-1,0)$. Momentum arrows indicate direction only; their scale is not accurate and serves illustrative purposes.}
\label{F2}
\end{center}
\end{figure}

\begin{figure}[h]
\begin{center}
\includegraphics[width=7cm]{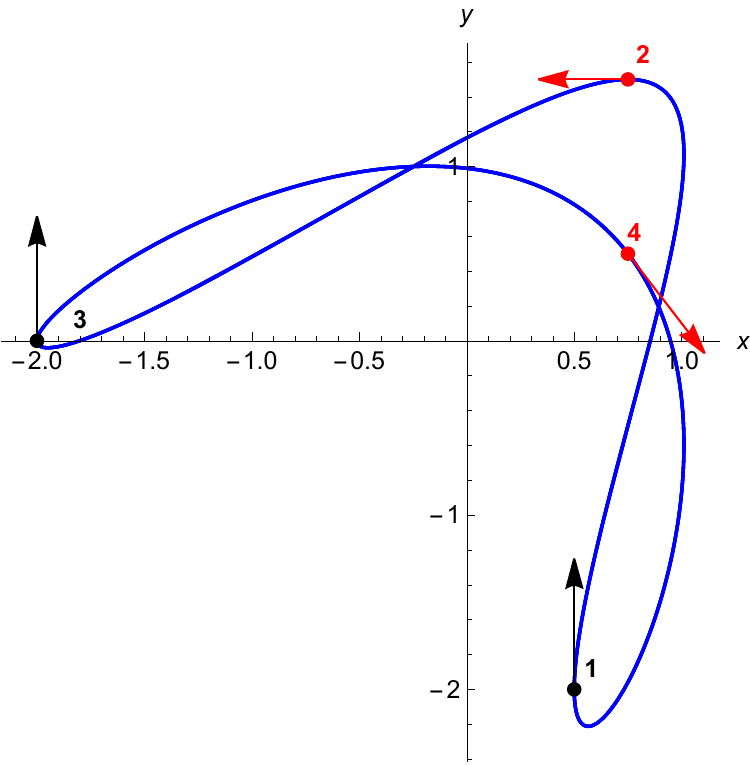}
\caption{\small \textbf{Class (III)}. Generic four-body choreography: all four particles move along the same path (blue curve). Here $m=1$, $\omega=1$ and the initial conditions are given by ${\mathbf r}_1=(0.5,-2)$, ${\mathbf r}_3=(-2,0)$, ${\mathbf p}_1=(0,1.5)$, ${\mathbf p}_3=(0,0.5)$. The remaining initial conditions were obtained from (\ref{con4brt}). The arrows representing the momentum vectors indicate direction only; their lengths are not to scale and are intended solely as a guide to the eye.}
\label{F4}
\end{center}
\end{figure}
\begin{figure}[h]
\begin{center}
\includegraphics[width=7cm]{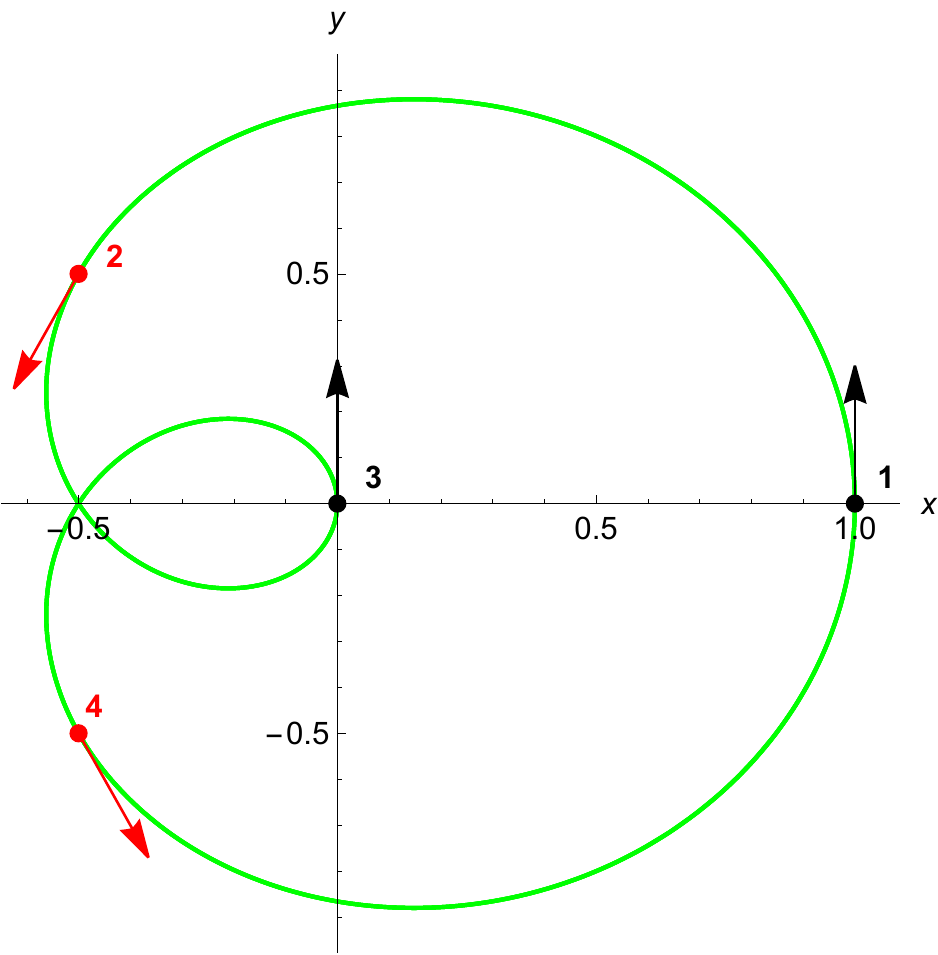}
\caption{\small \textbf{Class (IV)}. Symmetric four-body choreography: Four-body trisectrix limaçon choreography with constant and equal relative distances $r_{13}(t)=r_{24}(t)=1$. Here $m=1$, $\omega=1$, the initial conditions taken from (\ref{iniconC3}). Momentum arrows indicate direction only; their scale is not accurate and serves illustrative purposes.} 
\label{F3}
\end{center}
\end{figure}

\clearpage

\section{Complete separation of variables}\label{sec3}

In this section, we return to the original Hamiltonian $\cal H$ \eqref{Ha} and show that, in full generality, it admits a complete separation of variables. To this end, introduce in (\ref{Ha}) vectorial two-dimensional Jacobi-like coordinates:

\begin{equation}
\begin{aligned}
\label{JacC}
& {\mathbf J}_0 \  = \ \frac{1}{4}(\, {\mathbf r}_1 \ + \ {\mathbf r}_2 \ + \  {\mathbf r}_3 \ + \ {\mathbf r}_4 \,)
\\ &
{\mathbf J}_1 \ = \ \frac{1}{\sqrt{2}}(\, {\mathbf r}_2 \ - \ {\mathbf r}_1 \,)
\\ &
{\mathbf J}_2 \ = \ \sqrt{\frac{2}{3}}(\, {\mathbf r}_3 \ - \ \frac{1}{2}({\mathbf r}_1\ + \  {\mathbf r}_2 ) \,)
\\ &
{\mathbf J}_3 \ = \ \frac{\sqrt{3}}{2}(\, {\mathbf r}_4 \ - \ \frac{1}{3}({\mathbf r}_1\ + \  {\mathbf r}_2\ + \  {\mathbf r}_3 ) \,) \ .
\end{aligned}    
\end{equation}
In these variables (\ref{JacC}), the original Hamiltonian (\ref{Ha}) reads
\begin{equation}
\label{HaJ}
{\cal H} \ = \ \frac{1}{8\,m}\bigg(\, \mathbf{P}_{J_0}^2 \,+\,4\,\mathbf{P}_{J_1}^2 \,+\,4\,\mathbf{P}_{J_2}^2 \,+\,4\,\mathbf{P}_{J_3}^2 \,\bigg) \ + \ V(J) \ ,
\end{equation}
with potential 
\begin{equation}
\label{VpJ}
    V(J) \ = \ m\,\omega^2\,\bigg(\, \frac{5}{4} J_1^2   
\ + \ \frac{3}{4} J_2^2  \ + \ J_3^2  \ - \ \frac{\sqrt{3}}{2}(\,{\mathbf J}_1 \cdot {\mathbf J}_2  ) \ + \ \sqrt{\frac{3}{2}}(\,{\mathbf J}_1 \cdot {\mathbf J}_3  )  \ - \ \frac{1}{\sqrt{2}}(\,{\mathbf J}_2 \cdot {\mathbf J}_3  )  \,\bigg) \ ,
\end{equation}
where $J_i=|{\mathbf J}_i|$ denotes the magnitude of ${\mathbf J}_i$, and $\mathbf{P}_{J_i}$ is the canonical conjugate momentum associated with $\mathbf{J}_i$, $i=0,1,2,3$. The center-of-mass variable $\mathbf{J}_0$ is cyclic and therefore entirely decouples from the internal relative dynamics.\\

\subsubsection{Reduced Hamiltonian in relative space}

\noindent Hereafter, we solely focus on the motion in the relative space, i.e. $\mathbf{P}_{J_0}=0$ and $\mathbf{J}_0=0$ which define the aforementioned center-of-mass frame. The corresponding reduced \textit{relative} Hamiltonian takes the form
\begin{equation}
\label{Hrel}
{\cal H}_{\rm rel} \ = \ {\cal H}\mid_{_{\mathbf{ P}_0=0}} \ = \ \frac{1}{2\,m}\bigg(\,\mathbf{P}_{J_1}^2 \,+\,\mathbf{P}_{J_2}^2 \,+\,\mathbf{P}_{J_3}^2 \,\bigg) \ + \ V(J) \ .
\end{equation}

\noindent Now, we change variables $(\mathbf{J}_1,\,\mathbf{J}_2,\,\mathbf{J}_3\,) \rightarrow (\mathbf{Q}_1,\,\mathbf{Q}_2,\,\mathbf{Q}_3\,)$ using a $(5\,\pi/6)$-rotation around ${\mathbf J}_3$
\begin{equation}
\begin{aligned}
& 
{\mathbf J}_1 \ = \ \cos(5\,\pi/6)\,{\mathbf Q}_1  \ - \ \sin(5\,\pi/6)\,{\mathbf Q}_2 \ = \ -\frac{\sqrt{3}}{2}\,{\mathbf Q}_1  \ - \ \frac{1}{2}\,{\mathbf Q}_2
\\ &
{\mathbf J}_2 \ = \ \sin(5\,\pi/6)\,{\mathbf Q}_1  \ + \ \cos(5\,\pi/6)\,{\mathbf Q}_2  \ = \ \frac{1}{2}\,{\mathbf Q}_1  \ - \ \frac{\sqrt{3}}{2}\,{\mathbf Q}_2
\\ &
{\mathbf J}_3 \ = \ {\mathbf Q}_3  \ .
\end{aligned}    
\end{equation}
In these new $Q-$coordinates, we have
\begin{equation}
\label{HrelQ}
{\cal H}_{\rm rel}  \ = \ \frac{1}{2\,m}\bigg(\,\mathbf{P}_{Q_1}^2 \,+\,\mathbf{P}_{Q_2}^2 \,+\,\mathbf{P}^3_{Q_3} \,\bigg) \ + \ V(Q) \ ,
\end{equation}
here
\begin{equation}
\label{VpQ}
    V(Q) \ = \ m\,\omega^2\bigg(\, \frac{3}{2} Q_1^2   
\ + \ \frac{1}{2} \,Q_2^2  \ + \  \,Q_3^2  \ - \ \sqrt{2}\,(\,{\mathbf Q}_1 \cdot {\mathbf Q}_3  )   \,\bigg) \ ,
\end{equation}
($Q_i=|{\mathbf Q}_i|$). Thus, the variable ${\mathbf Q}_2$ is separated out. Finally, performing a further rotation $(\mathbf{Q}_1,\,\mathbf{Q}_2,\,\mathbf{Q}_3,\,) \rightarrow (\mathbf{U}_1,\,\mathbf{U}_2,\,\mathbf{U}_3,\,)$ around ${\mathbf Q}_2$
\begin{equation}
\begin{aligned}
& 
{\mathbf Q}_1 \ = \ \cos\big(\tan ^{-1}(\sqrt{2})\,\big)\,{\mathbf U}_1  \ - \ \sin\big(\tan ^{-1}(\sqrt{2})\,\big)\,{\mathbf U}_3  \ = \ \frac{1}{\sqrt{3}}\,{\mathbf U}_1  \ - \ \sqrt{\frac{2}{3}}\,{\mathbf U}_3
\\ &
{\mathbf Q}_2 \ = \ {\mathbf U}_2
\\ &
{\mathbf Q}_3 \ = \ \sin\big(\tan ^{-1}(\sqrt{2})\,\big)\,{\mathbf U}_1  \ + \ \cos\big(\tan ^{-1}(\sqrt{2})\,\big)\,{\mathbf U}_3 \ = \ \sqrt{\frac{2}{3}}\,{\mathbf U}_1  \ + \  \frac{1}{\sqrt{3}}\,{\mathbf U}_3\ ,
\end{aligned}    
\end{equation}
we obtain
\begin{equation}
\label{HrelU}
{\cal H}_{\rm rel}  \ = \ \frac{1}{2\,m}\bigg(\,\mathbf{P}_{U_1}^2 \,+\,\mathbf{P}_{U_2}^2 \,+\,\mathbf{P}^2_{U_3} \,\bigg) \ + \ \frac{1}{2}\,m\,\,\bigg(\, \Omega_1^2 \,U_1^2   
\ + \ \Omega_2^2 \,U_2^2  \ + \ \Omega_3^2\,U_3^2     \,\bigg) \ ,
\end{equation}
with angular frequencies
\begin{equation}
\label{fre}
\Omega_1\ = \  \omega \ ,\qquad \Omega_2 \ = \ \omega \ ,\qquad \Omega_3\ = \  2\,\omega\,.
\end{equation}
Therefore, the reduced Hamiltonian (\ref{Hrel}) governing the relative motion with six degrees of freedom decomposes into the sum of three independent harmonic oscillators with two degrees of freedom each and natural frequencies $\Omega_1= \omega,\,\Omega_2= \omega,\,\Omega_3= 2\,\omega,\,$ respectively. This is in complete agreement with the angular frequencies appearing in (\ref{trajec}). {In summary, the original system \eqref{Ha} admits a complete separation of variables using Jacobi-like coordinates given by
\begin{equation}
\begin{aligned}
\label{finalT}
& {\mathbf U}_0 \ = \ \tfrac{1}{4}\,({\mathbf r}^+_{13}\,+\,{\mathbf r}^+_{24}) \qquad ;
&&\quad \mathbf{P}_{U_0} \ = \  {\mathbf p}^+_{13}\,+\,{\mathbf p}^+_{24}
\\[4pt]
& {\mathbf U}_1 \ = \ -\tfrac{1}{\sqrt{2}}\,{\mathbf r}_{24} \qquad ;
&&\quad \mathbf{P}_{U_1} \ = \ -\tfrac{1}{\sqrt{2}}\, \mathbf{p}_{24} 
\\[4pt]
& {\mathbf U}_2 \ = \ \tfrac{1}{\sqrt{2}}\,{\mathbf r}_{13}\qquad ;
&&\quad \mathbf{P}_{U_2} \ = \ \tfrac{1}{\sqrt{2}} \,\mathbf{p}_{13}
\\[4pt]
& {\mathbf U}_3 \ = \ \tfrac{1}{2}\,({\mathbf r}^+_{24}\,-\,{\mathbf r}^+_{13}\,)\qquad ;
&&\quad \mathbf{P}_{U_3} \ = \ \tfrac{1}{2}\,({\mathbf p}^+_{24}\,-\,{\mathbf p}^+_{13}\,) \ ,
\end{aligned}
\end{equation}
where ${\mathbf U}_0={\mathbf J}_0$ is the center-of-mass variable. In the center-of-mass frame, the canonical transformation (\ref{finalT}) simplifies to 
\begin{equation}
\begin{aligned}
\label{UvarCenter}
& {\mathbf U}_0 \ = \ 0 \qquad ;
&&\quad \mathbf{P}_{U_0} \ = \ 0
\\[4pt]
& {\mathbf U}_1 \ = \ -\tfrac{1}{\sqrt{2}}\,{\mathbf r}_{24} \qquad ;
&&\quad \mathbf{P}_{U_1} \ = \ -\tfrac{1}{\sqrt{2}}\, \mathbf{p}_{24} 
\\[4pt]
& {\mathbf U}_2 \ = \ \tfrac{1}{\sqrt{2}}\,{\mathbf r}_{13}\qquad ;
&&\quad \mathbf{P}_{U_2} \ = \ \tfrac{1}{\sqrt{2}} \,\mathbf{p}_{13}
\\[4pt]
& {\mathbf U}_3 \ = \ {\mathbf r}^+_{24} \ = \ -{\mathbf r}^+_{13}   \qquad ;
&&\quad \mathbf{P}_{U_3} \ = \ {\mathbf p}^+_{24} \ = \ -{\mathbf p}^+_{13} \ .
\end{aligned}
\end{equation}

Motivated by the discrete permutation symmetry \( \mathcal{S}_2(13) \otimes \mathcal{S}_2(24) \) in (\ref{Ha}), the choice of the \(\mathbf{U}\)-variables appears natural. In this section, we rigorously demonstrate that they indeed allow for a complete separation of variables.\\
}

\textbf{Remark}: \textit{The original Hamiltonian~\eqref{Ha} can be effectively reduced to a sum of three independent two-dimensional isotropic harmonic oscillators and a free particle in two dimensions. More importantly, this decomposition offers a clear conceptual framework (superintegrability of anisotropic multidimensional harmonic oscillator) for exploring generalizations to the planar $(D = 2)$ $n$-body problem with $n > 4$, as well as to higher-dimensional settings $D > 2$.} 

\subsection{Relative motion: maximal superintegrability}
\label{relmotionsup}

From the original system \eqref{Ha}, upon separating the center of mass motion, we are left with a completely separable Hamiltonian $\mathcal{H}_{\rm rel} $ with six degrees of freedom (\ref{HrelU}). In this section, we demonstrate that $\mathcal{H}_{\rm rel}$ is, in fact, maximally superintegrable \cite{Miller2013, grigoriev2018superintegrable, Marquette2017}.

\noindent As derived in (\ref{HrelU}), the post-separation Hamiltonian $H_{\rm rel}$ can be viewed as the sum of two uncoupled systems: a four-dimensional isotropic harmonic oscillator $\mathcal{H}_{4D}$ with frequency $\omega$ and a two-dimensional harmonic oscillator $\mathcal{H}_{2D}$ with frequency $2\omega$ (as shown in (\ref{fre})\,). In other words,

\begin{equation}
\mathcal{H}_{\rm rel} \ = \ \mathcal{H}_{4D}\  + \ \mathcal{H}_{2D}\ .
\end{equation}

This direct sum structure implies a large number of conserved quantities. In fact, each subsystem is maximally superintegrable on its own, so together they contribute a total of ten independent integrals of motion (with six of them in mutual involution), confirming that $H_{\rm rel}$ is (at least) minimally superintegrable.

\vspace{0.1cm}

More concretely, the $2D$ subsystem $\mathcal{H}_{2D}$ alone provides three algebraically independent integrals. For example, in Cartesian coordinates $\mathbf{U}_3 \equiv (u^{(3)}_x, u^{(3)}_y)$ and $\mathbf{P}_{U_3} \equiv (P^{(3)}_{x}, P^{(3)}_{y})$, one can verify that (i) the $2D$ Hamiltonian $\mathcal{H}_{2D}$, (ii) the angular momentum $L$ and (iii) the $x$-direction energy $ H^{(3)}_x$

\begin{equation}
\label{InU3}
L\  =\ u^{(3)}_x\, P^{(3)}_{y} \ -\  u^{(3)}_y\, P^{(3)}_{x}, \quad \text{and} \quad H^{(3)}_x \ = \ \frac{{(P^{(3)}_{x})}^2}{2\,m}\ + \  \frac{1}{2}\,m\, \Omega_3^2\, {(u^{(3)}_x})^2 \ ,    
\end{equation}
($\Omega_3=2\,\omega$) are conserved under time evolution, they obey $\{L, \mathcal{H}_{\text{rel}}\} = 0$ and $\{H^{(3)}_x, \mathcal{H}_{\text{rel}}\} = 0$, where $\mathcal{H}_{\text{rel}}$ is the reduced Hamiltonian~(\ref{HrelU}) in the space of relative motion. On the other hand, the Hamiltonian \(\mathcal{H}_{4D}\) contributes with seven well-known additional conserved quantities, detailed in Appendix~\ref{secA}.

Using (\ref{UvarCenter}) we can express the above integral $L$ in terms of the original Cartesian individual variables $\mathbf{r}_i=(x_i,\,y_i)$,\,$\mathbf{p}_i=(p_{x_i},\,p_{y_i})$, $i=1,2,3,4$. Explicitly,
\begin{equation}
    L \ = \ (x_2 \,+\, x_4)(p_{y_2} \,+\,p_{y_4})\ -\  (y_2 \,+\, y_4)(p_{x_2} \,+\,p_{x_4}) \ , 
\end{equation}
and similarly for $H^{(3)}_x$. 

\vspace{0.1cm}

Furthermore, the commensurability of the system’s natural frequencies, given by $\frac{\Omega_3}{\Omega_{1(2)}} = 2$, leads to the emergence of an additional integral of motion. This extra eleventh integral can be derived from the Holt potential \cite{Holt}, as described in [\cite{Bonatsos}, Eqs. (37)–(38)], thereby establishing the maximal superintegrability of the system. Specifically, consider the Cartesian coordinates $\mathbf{U}_1 \equiv (u^{(1)}_x, u^{(1)}_y)$ and the corresponding momenta $\mathbf{P}_{U_1} \equiv (P^{(1)}_x, P^{(1)}_y)$. It is then straightforward to verify that
\begin{equation}
  {\cal I} \ = \   {(P^{(1)}_{x})}^2\,P^{(3)}_{y} \ + \ 4\,m^2\,\omega^2\,u^{(1)}_x\,u^{(3)}_y\,P^{(1)}_{x} \ - \ m^2\,\omega^2\,{(u^{(1)}_{x})}^2\,P^{(3)}_{y} \ ,
\end{equation}
Poisson commutes with $\mathcal{H}_{\text{rel}}$, thus, $\frac{d}{dt}{\cal I}=0$.

According to the celebrated Nekhoroshev theorem \cite{nekhoroshev1972action}, the existence of eleven integrals of motion (the maximum possible number for this system) explains why, in the center-of-mass frame (bounded motion), all trajectories are periodic, regardless of the initial conditions.

It is worth emphasizing that the maximal superintegrability of the system alone does not account for the existence of the 4-body choreography. This exceptional configuration is more accurately described by the notions of \textit{particular integral} and \textit{particular involution}~\cite{Escobar-Ruiz2024}, see below.

\section{The four-body trisectrix lima\c{c}on choreography: \textit{particular integrals} and \textit{particular involution} }\label{sec4}

\subsection{Particular integrability}\label{}

Let $H$ be a Hamiltonian system with \( n \) degrees of freedom. We adopt the notion of particular integrals as introduced in \cite{Turbiner_2013, Escobar-Ruiz2024}, see also \cite{Azuaje2025}. Let $H$ be the Hamiltonian of the system and $f:\Gamma\to \mathbb{R}$ a smooth function on the phase space $\Gamma$. We call $f$ a \textit{particular integral} of $H$ if
\[
 \{ f, H \} \ = \  a\, f ,
\] 
for some smooth function $a$ on $\Gamma$. In the special case $a=0$, $f$ is a standard first integral (in the Liouville sense), meaning $f$ is constant along every trajectory (we will also refer to this as a global integral of motion). In general, if $a \neq 0$, $f$ is not conserved for arbitrary solutions; however, there may exist special trajectories on which $f$ stays constant. Likewise, two functions $f, g \in C^\infty(\Gamma)$ are said to be in \textit{particular involution} if $\{ f, g \} \neq 0$ in general, but vanishes for all time along some specific trajectory $\gamma$. In other words, $f$ and $g$ do not commute under the Poisson bracket universally, yet they do Poisson commute when the system is restricted to a certain invariant trajectory (or submanifold) $\gamma$.

\subsection{Existence of \textit{particular integrals} in the four-body trisectrix lima\c{c}on choreography}

\noindent The specific initial conditions (\ref{iniconC3}) for the trisectrix lima\c{c}on choreography enforce strong constraints: specifically, under those conditions the magnitudes of the vectors $\mathbf {U}_1(t), \mathbf {U}_2(t), \mathbf {U}_3(t)$ defined in (\ref{UvarCenter}) (which are proportional to the relative distances $r_{13}(t), r_{24}(t)$, and $r^+_{13}(t)=r^+_{24}(t)$, respectively) remain constant in time, as do the magnitudes of their conjugate momenta $P_{U_i}(t)$. Geometrically, this means the four-body  motion on the trisectrix lima\c{c}on is confined to three rotational degrees of freedom (three fixed angles), making the system effectively behave as three decoupled angular oscillators. In this scenario, additional conserved quantities emerge. In fact, the scalar products $\mathbf{r}_{13} \cdot \mathbf{r}_{24}$ and $\mathbf{p}_{13} \cdot \mathbf{p}_{24}$ themselves act as \textit{particular integrals} of motion: their non-zero Poisson brackets with $H_{\rm rel}$ vanish on the choreography $\gamma_{\rm trisectrix}$:

\begin{align}
\{\mathbf{r}_{13} \cdot \mathbf{r}_{24}\,,\, \mathcal{H}_{\text{rel}}\}\big|_{\gamma_{\rm trisectrix}} = \{\mathbf{p}_{13} \cdot \mathbf{p}_{24}\,,\, \mathcal{H}_{\text{rel}}\}\big|_{\gamma_{\rm trisectrix}} = 0 \ .
\end{align}

\textbf{Remark:} \textit{The quantities $\mathbf{r}_{13} \cdot \mathbf{r}_{24}$ and $\mathbf{p}_{13} \cdot \mathbf{p}_{24}$ are particular integrals that serve to characterize the entire family of four-body trajectories belonging to class (IV).}

\vspace{0.2cm}

For instance, the most generic trajectory (\ref{trajec}) of this system is characterized by the global integral

\begin{equation}
\label{Igen}
   {\cal I} \ = \  \mathbf{r}_{13} \cdot \mathbf{r}_{24} \ + \ \frac{1}{m^2\,\omega^2}\,\mathbf{p}_{13} \cdot \mathbf{p}_{24} \ .
\end{equation}

In contrast, in the \textit{rigid} choreography case (when both relative distances $r_{13}$ and $r_{24}$ are constant; case (IV) in our classification of initial conditions), the two terms in ${\cal I}$ are separately conserved (each term individually remains constant). For the generic motion, by comparison, $\mathbf{r}_{13}\cdot \mathbf{r}_{24}$ and $\mathbf{p}_{13}\cdot \mathbf{p}_{24}$ are only conserved in tandem – their weighted sum $I$ is constant along the motion, but each part oscillates.

Furthermore, the two global integrals $L$ and $H^{(3)}_x$ obtained above (\ref{InU3}) are not in involution, since their Poisson bracket does not vanish. Indeed, one finds 
\begin{align}
    \{L, H^{(3)}_x\} \ = \  \frac{1}{m} P^{(3)}_{x} \,P^{(3)}_{y} \ + \  m \,\Omega_3^2 \,u^{(3)}_x\, u^{(3)}_y \ .
\end{align}
However, for the special 4-body trisectrix lima\c{c}on choreography under study, which arises from the tuned initial conditions (\ref{iniconC3}), these two integrals become involutive along that trajectory $\gamma_{\rm trisectrix}$. In other words, although $ \{L, H^{(3)}_x\} \neq0$ in general, we have:
\begin{align}
\label{LHx}
    \{L, H^{(3)}_x\}\big|_{\gamma_{\rm trisectrix}} = 0 \ .
\end{align}
Hence, in this case the pair of integrals $(L,\, H^{(3)}_x)$ are in \textit{particular involution}.

\section{{Choreographic} fragmentation} 
\label{sec5}

Assume that the initial conditions given in equations~(\ref{eq:bal4-cost})–(\ref{eq:bal4-sint}) are satisfied; these define a generic four-body choreography of class (III). In this configuration, the particular integral~(\ref{Igen}) vanishes identically. However, a slight perturbation of the system causes~(\ref{Igen}) to lose its conserved character, leading to a dynamical transition wherein the original motion fragments into two independent two-body choreographies. We refer to this phenomenon as choreographic fragmentation. A supplementary motion file is included to visualize this fragmentation.

\subsection{Fragmentation of 4-Body Choreographies}

{Consider the initial conditions (\ref{con4brt}) described in Lemma~\ref{lem1}, which define the necessary and sufficient conditions to obtain a four-body choreography from
the general solutions (\ref{trajec})}, i.e., a motion in which all four particles follow the same orbit with a uniform time shift. At a specific time $t_{\mathrm{ex}}$, the momenta of some particles---specifically $\mathbf{p}_{13}(t_{\mathrm{ex}})$ and/or $\mathbf{p}_{24}(t_{\mathrm{ex}})$---are abruptly modified. This momentum alteration breaks the symmetry conditions~\eqref{eq:bal4-cost}--\eqref{eq:bal4-sint}, thereby causing the original 4-body choreography to split into two independent 2-body choreographies. 

We refer to these instantaneous momentum alterations as \textit{boosts}, drawing an analogy with Lorentz boosts in relativistic mechanics. In this context, the dynamics are assumed to be continuous in position but discontinuous in momentum at the instant of the boost; that is, the spatial configuration remains unchanged immediately before and after $t_{\mathrm{ex}}$, while the momenta are modified.

Let this boost occur at time $t_{\mathrm{ex}} = \frac{2\pi}{\omega} \beta$, with $\beta \in \mathbb{Z}$, meaning the choreography undergoes an abrupt change after completing $\beta$ full cycles. The momentum adjustment must preserve the center-of-mass condition, requiring that the total linear momentum remains zero, specifically $\mathbf{p}_{13}^{+} = -\mathbf{p}_{24}^{+}$.

To illustrate, suppose particle 2 experiences a momentum increment $\boldsymbol{\delta}$, while particle 3 receives the opposite change:
\[
\mathbf{p}_2(t_{\mathrm{ex}}) = \mathbf{p}_2(0) + \boldsymbol{\delta}, \quad \mathbf{p}_3(t_{\mathrm{ex}}) = \mathbf{p}_3(0) - \boldsymbol{\delta}.
\]
Consequently, the relative momenta evolve as:
\[
\mathbf{p}_{24}(t_{\mathrm{ex}}) = \mathbf{p}_{24}(0) + \boldsymbol{\delta}, \quad
\mathbf{p}_{13}(t_{\mathrm{ex}}) = \mathbf{p}_{13}(0) + \boldsymbol{\delta}, \quad
\mathbf{p}_{13}^{+}(t_{\mathrm{ex}}) = \mathbf{p}_{13}^{+}(0) - \boldsymbol{\delta}.
\]
The post-boost trajectories correspond to two decoupled 2-body choreographies:
\begin{multline*}
\mathbf{r}_{1,3}(t) = \frac{1}{2} \left( \pm \mathbf{r}_{13}(t_{\mathrm{ex}}) \cos(\omega t) + \mathbf{r}_{13}^{+}(t_{\mathrm{ex}}) \cos(2\omega t) \right) \\
+ \frac{1}{2\,m\,\omega} \left( \pm \left[\mathbf{p}_{13}(t_{\mathrm{ex}}) + \boldsymbol{\delta}\right] \sin(\omega t) + \frac{1}{2} \left[\mathbf{p}_{13}^{+}(t_{\mathrm{ex}}) - \boldsymbol{\delta}\right] \sin(2\omega t) \right),
\end{multline*}
\begin{multline*}
\mathbf{r}_{2,4}(t) = \frac{1}{2} \left( \pm \mathbf{r}_{24}(t_{\mathrm{ex}}) \cos(\omega t) - \mathbf{r}_{13}^{+}(t_{\mathrm{ex}}) \cos(2\omega t) \right) \\
+ \frac{1}{2\,m\,\omega} \left( \pm \left[\mathbf{p}_{24}(t_{\mathrm{ex}}) + \boldsymbol{\delta}\right] \sin(\omega t) - \frac{1}{2} \left[\mathbf{p}_{13}^{+}(t_{\mathrm{ex}}) - \boldsymbol{\delta}\right] \sin(2\omega t) \right) \ ,
\end{multline*}
where the upper sign $(+)$ corresponds to $\mathbf{r}_{1}(\mathbf{r}_{2})$ and the lower sign $(-)$ to $\mathbf{r}_{3}(\mathbf{r}_{4})$. From a mathematical standpoint, the boost $\boldsymbol{\delta}$ can point in any direction within the plane, provided it is counterbalanced by an opposite momentum in the system, thus preserving the total momentum. However, it is physically insightful to model a possible mechanism for this perturbation. 

Consider the following scenario: at time $t_{\mathrm{ex}}$, bodies 2 and 3 exchange equal masses $\mu_{\mathrm{ex}}$, each ejecting mass at velocity $\mathbf{v}_{\mathrm{ex}}$. The recoil momentum per body is $\frac{\boldsymbol{\delta}}{2} = \mu_{\mathrm{ex}} \mathbf{v}_{\mathrm{ex}}$. Upon receiving the ejected mass from its counterpart, each body experiences an additional momentum transfer of $\frac{\boldsymbol{\delta}}{2}$, resulting in a net change of $\pm \boldsymbol{\delta}$. This boost is directed along the line joining the two particles.

At $t_{\mathrm{ex}}$, the positions coincide with those at $t = 0$, i.e., $\mathbf{r}_j(t_{\mathrm{ex}}) = \mathbf{r}_j(0)$. For instance,
\[
\mathbf{r}_2(t_{\mathrm{ex}}) = \mathbf{r}_1\left(t_{\mathrm{ex}} + \frac{\pi}{2\omega}\right) = \frac{1}{2}(-\hat{\mathbf{e}}_x + \hat{\mathbf{e}}_y), \quad \text{and} \quad \mathbf{r}_3(t_{\mathrm{ex}}) = 0.
\]
The slope of the line joining particles 2 and 3 is $-\pi/4$, and the boost vector takes the form $\boldsymbol{\delta} = \delta(-\hat{\mathbf{e}}_x + \hat{\mathbf{e}}_y)$. In Fig.~\ref{fig:boost23}, this is illustrated with a specific case: 
\[
\boldsymbol{\delta} = \mu_{\mathrm{ex}} \cdot \frac{3}{4} \cdot 10\, \omega\, (-\hat{\mathbf{e}}_x + \hat{\mathbf{e}}_y), \quad \mu_{\mathrm{ex}} = \frac{m}{10}.
\]
This leads to the following momentum transitions:
\[
\mathbf{p}_2(0) = m\,\omega \left(-\tfrac{5}{4} \hat{\mathbf{e}}_x - \tfrac{1}{4} \hat{\mathbf{e}}_y\right), \quad
\mathbf{p}_3(0) = -m\,\omega \left(\hat{\mathbf{e}}_x + \tfrac{1}{2} \hat{\mathbf{e}}_y\right).
\]
Meanwhile, the momenta of bodies 1 and 4 remain unchanged and tangent to both the original and resulting trajectories, as they are not affected by the boost.

\begin{figure}[h]
\centering
\includegraphics[scale=0.4]{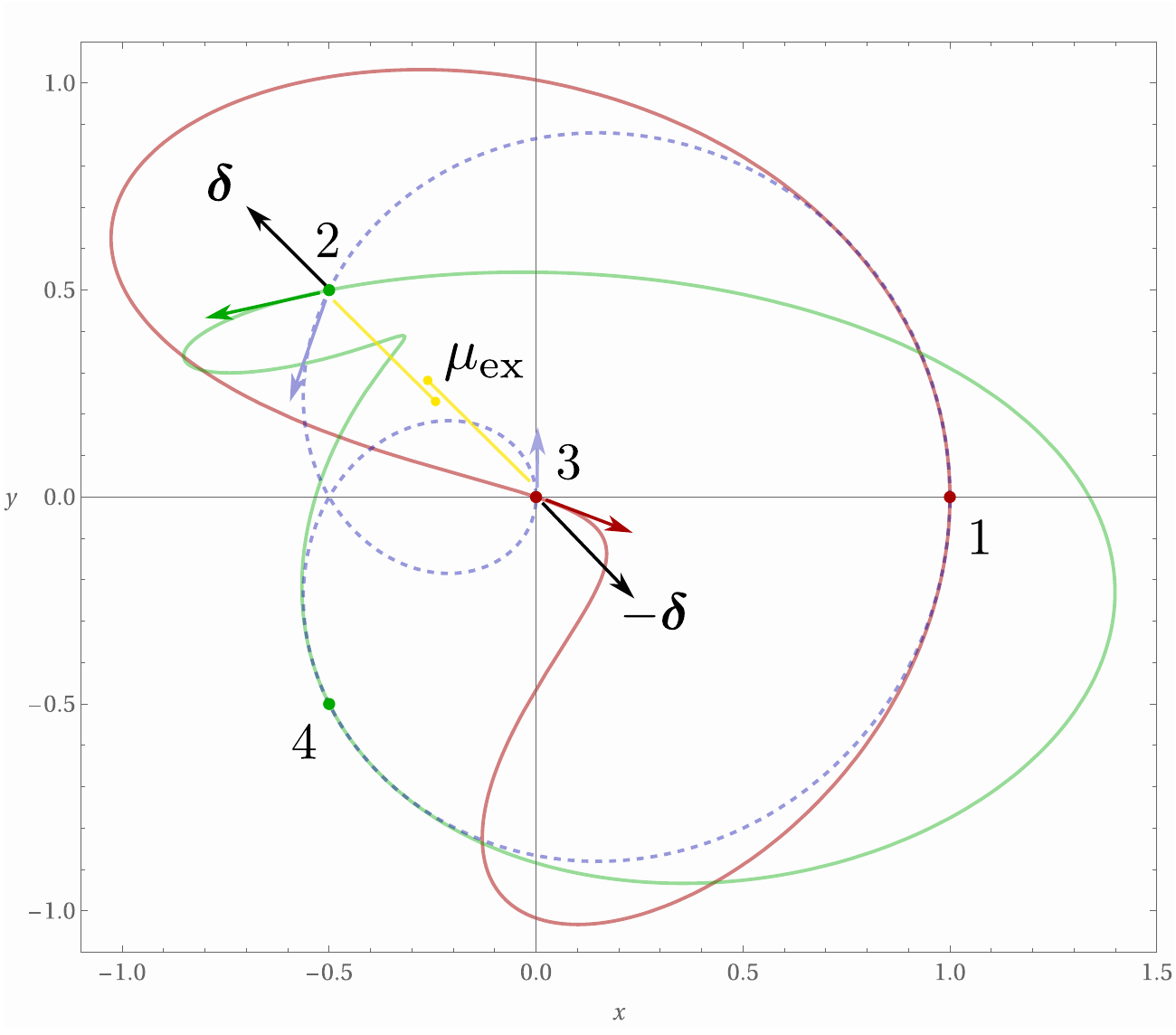}
\caption{\label{fig:boost23}{Choreographic fragmentation. Four bodies initially follow a 4-body trisectrix limaçon orbit (dotted blue). After $\beta$ full cycles, at time $t_{\mathrm{ex}} = \beta\, \omega\, t$, a pair of opposite boosts is applied to bodies 2 and 3 (represented by the yellow equal mass exchange $\mu_{\mathrm{ex}}$). The choreography fragments into two 2-body motions: bodies 1 and 3 follow the red orbit, and bodies 2 and 4 the green orbit. In this example, $m=1$, $\omega=1$ and $\boldsymbol{\delta} = \tfrac{3}{4}(-\hat{\mathbf{e}}_x + \hat{\mathbf{e}}_y)$. An animation file of this fragmentation is presented in a supplementary file.} }
\end{figure}

\subsection{Fusion of Two 2-Body Choreographies}

We now consider the inverse process. Suppose we start from two independent 2-body choreographies governed by the trajectories:
\begin{subequations}
\begin{align}
\mathbf{r}_{1,3}(t) &= \frac{1}{2} \left( \pm \mathbf{r}_{13}(0) \cos(\omega t) + \mathbf{r}_{13}^{+}(0) \cos(2\omega t) \right) \nonumber \\
&\quad + \frac{1}{2\,m\,\omega} \left( \pm \mathbf{p}_{13}(0) \sin(\omega t) + \frac{1}{2} \mathbf{p}_{13}^{+}(0) \sin(2\omega t) \right), \\
\mathbf{r}_{2,4}(t) &= \frac{1}{2} \left( \pm \mathbf{r}_{24}(0) \cos(\omega t) - \mathbf{r}_{13}^{+}(0) \cos(2\omega t) \right) \nonumber \\
&\quad + \frac{1}{2\,m\,\omega} \left( \pm \mathbf{p}_{24}(0) \sin(\omega t) - \frac{1}{2} \mathbf{p}_{13}^{+}(0) \sin(2\omega t) \right),
\end{align}
\end{subequations}
where the choreography conditions are initially not satisfied, i.e., 
\[
\mathbf{r}_{24}(0) \neq \frac{1}{m\,\omega} \mathbf{p}_{13}(0), \quad \text{or} \quad \frac{1}{m\,\omega} \mathbf{p}_{24}(0) \neq -\mathbf{r}_{13}(0).
\]
To enable fusion into a 4-body choreography, two pairs of opposite momentum boosts are applied at time $t_{\rm fus} = \frac{2\pi}{\omega} \beta$. The required boosts are:
\[
\boldsymbol{\delta}_{\rm fus\,13} = \frac{1}{2} \left( m\,\omega\, \mathbf{r}_{24}(0) - \mathbf{p}_{13}(0) \right), \quad
\boldsymbol{\delta}_{\rm fus\,24} = \frac{1}{2} \left( m\,\omega\, \mathbf{r}_{13}(0) - \mathbf{p}_{24}(0) \right).
\]
Once these specific boosts are applied (ensuring the center-of-mass constraints $\mathbf{r}_{13}^{+} = -\mathbf{r}_{24}^{+}$ and $\mathbf{p}_{13}^{+} = -\mathbf{p}_{24}^{+}$) the two independent 2-body systems are transformed into a unified 4-body choreography.

Unlike fragmentation, which can occur under a wide range of perturbations, fusion into a 4-body choreography demands finely tuned momentum adjustments, akin to executing a precise propulsion maneuver in spacecraft dynamics.

\section{Conclusions}\label{sec6}

In this work, we investigated a planar four-body system with a quadratic pairwise potential and demonstrated that its reduced (center-of-mass–free) six-degree-of-freedom dynamics is maximally superintegrable. In particular, we showed that the system can be separated into three independent harmonic oscillators with commensurate frequencies $(\omega, \omega, 2\omega)$, yielding eleven independent integrals of motion (five more than the six required for Liouville integrability).  \\

\noindent A central result of our study is the analysis of the analytical four-body choreography along a trisectrix limaçon. We found that maximal superintegrability alone is not sufficient to produce this choreographic motion. Instead, the choreography arises due to the existence of \textit{particular integrals} that are conserved only on this specific trajectory.  This explains the exceptional nature of the four-body limaçon choreography within the broader family of periodic solutions. \\

Furthermore, we described an interesting fragmentation phenomenon: under certain conditions the four-body choreography splits, or “bifurcates,” into two isomorphic two-body choreographies. We characterized this choreographic fragmentation by showing that it coincides with the breakdown of the particular integrals that sustain the four-body motion. This represents a new type of dynamical transition in a superintegrable system, wherein a highly symmetric motion (the four-body choreography) fractures into two simpler subsystems when the fine-tuned conditions for the particular integrals are not met. The inverse process, involving the fusion of two-body choreographies into a four-body configuration, was also described.\\

Our findings open several promising avenues for future research. In particular, one could pursue a systematic classification of analogous choreographic solutions in other superintegrable $n$-body systems, exploring whether similar particular integrals govern their dynamics and what forms of choreographic fragmentation may arise.  \\

Finally, exploring quantum analogues of these phenomena would be highly interesting. Quantizing the four-body limaçon system or related models could reveal whether the classical constants of motion and choreographic fragmentation have counterparts in the quantum regime (e.g. degenerate energy levels or conserved quantum numbers reflecting the superintegrability). Such studies would deepen our understanding of how geometric and algebraic structures in classical many-body dynamics translate into quantum systems.\\

In summary, this work elucidates how a unique four-body choreography emerges from a maximally superintegrable framework supplemented by particular integrals of motion. The choreographic fragmentation we identified underscores a new mechanism by which symmetry and integrability can give rise to complex collective behavior. We believe these insights motivate further theoretical developments, helping to connect the geometric nature of choreographies with the algebra of integrals of motion, and guiding future explorations into $n$-body dynamics and their quantum counterparts.

\section*{Acknowledgments}

The author A.M. Escobar-Ruiz thanks Rafael Azuaje for his interest in this work and for the helpful discussions. {The authors gratefully acknowledge the valuable remarks provided by the anonymous referees.} A.M. Escobar-Ruiz would like to thank the support from Consejo Nacional de Humanidades, Ciencias y Tecnologías (CONAHCyT) of Mexico under Grant CF-2023-I-1496 and from UAM research grant DAI 2024-CPIR-0. MFG
acknowledges support from CONAHCyT, project CF-2023-I-1864.

\backmatter

\bigskip

\begin{appendices}

\section{Integrals of motion for a 4D isotropic harmonic oscillator}\label{secA}

By the direct sum of ${\mathbf U}_1 \equiv (q_1,q_2)$ and ${\mathbf U}_2 \equiv (q_3,q_4)$ we form a $4-$dimensional vector ${\mathbf U} = {\mathbf U}_1 \oplus {\mathbf U}_2 \equiv (q_1,q_2,q_3,q_4)$, see (\ref{HrelU}). This 4-dimensional isotropic harmonic oscillator is governed by the Hamiltonian:
\begin{equation}
    H_{4D} = \sum_{i=1}^{4} \left( \frac{p_i^2}{2\,m} \ + \  \frac{1}{2}\, m \,\omega^2 q_i^2 \right),
\end{equation}
where $q_i$ and $p_i$ are the conjugate generalized coordinates and momenta, $m$ is the mass of each particle, and $\omega$ is the common frequency of oscillation.

This system exhibits a rich algebraic structure due to its symmetries, which include both rotational symmetries and hidden symmetries arising from its superintegrability \cite{Evans}. The system has multiple conserved quantities: the total energy $H_{4D}$ is an integral of motion, the six angular momentum components:
\begin{equation}
    L_{ij} \ = \ q_i\, p_j \ - \ q_j\, p_i, \quad (1 \leq i < j \leq 4),
\end{equation}
which satisfy the Poisson brackets:
\begin{equation}
    \{ L_{ij}, L_{kl} \} = \delta_{ik} L_{jl} + \delta_{jl} L_{ik} - \delta_{il} L_{jk} - \delta_{jk} L_{il} \ ,
\end{equation}
are conserved. They obey the $\mathfrak{so}(4)$ algebra. Less obvious integrals of motion come from the Fradkin tensor \cite{Fradkin}:
\begin{equation}
    I_{ij} = \frac{p_i\, p_j}{m} \ + \  m\, \omega^2 \,q_i\, q_j,
\end{equation}
which commute under the Poisson brackets $ \{I_{ij}, I_{kl} \}=0$. This means that $I_{ij}$ represent a set of mutually commuting integrals of motion that provide a hidden symmetry not immediately evident from the rotational properties alone.
The trace of $I_{ij}$ is related to the Hamiltonian,
\begin{equation}
    \sum_{i=1}^{4} I_{ii} = \frac{2\,H_{4D}}{m}\ ,
\end{equation}
thus, only three of the integrals $I_{ii}$ are algebraically independent.

The mixed brackets between $I_{ij}$ and $L_{kl}$ read:
    \begin{equation}
        \{ I_{ij}, L_{kl} \} = \delta_{ik} I_{jl} + \delta_{jl} I_{ik} - \delta_{il} I_{jk} - \delta_{jk} I_{il} \ .
    \end{equation}
This structure indicates that $I_{ij}$ forms a rank-2 symmetric tensor under the rotation group $SO(4)$.
The full dynamical symmetry group (16 generators in total) of the system is $U(4)$. Although there are several conserved quantities, only 7 are algebraically independent. This ensures integrability and confirms the system as maximally superintegrable. For instance, we take the Hamiltonian $H_{4D}$, three components of $L_{ij}$ (say $L_{12}$,$L_{13}$ and $L_{14}$), and three elements of $I_{ij}$ (say $I_{11},I_{22},I_{33}$). 

The following table summarizes the Poisson brackets between the algebraically independent integrals of motion:

\begin{table}[h]
    \centering
    \renewcommand{\arraystretch}{1.5}
    \begin{tabular}{|c|c|c|}
        \hline
        \textbf{Poisson Brackets} & \textbf{Result} & \textbf{Description} \\
        \hline
        $\{ L_{12}, L_{13} \}$ & $L_{23}$ & $SO(4)$ angular momentum algebra \\
        $\{ L_{12}, L_{14} \}$ & $L_{24}$ & $SO(4)$ angular momentum algebra \\
        $\{ L_{13}, L_{14} \}$ & $L_{34}$ & $SO(4)$ angular momentum algebra \\
        $\{ I_{ii}, I_{jj} \}$ & $0$ & Commuting quadratic integrals \\
        $\{ I_{ii}, L_{ij} \}$ & $2 \,\omega^2 \,I_{ij}$ & Fradkin tensor coupling to $SO(4)$ \\
        $\{ H_{4D}, L_{ij} \}$ & $0$ & Energy is a Casimir element \\
        $\{ H_{4D}, I_{ii} \}$ & $0$ & Energy commutes with quadratic integrals \\
        \hline
    \end{tabular}
    \caption{Poisson algebra of the 7 algebraically independent integrals of motion ($H_{4D},L_{12}, L_{13},L_{14}, I_{11}, I_{22}, I_{33}$) in the 4D isotropic harmonic oscillator.}
    \label{tab:poisson_algebra}
\end{table}

\end{appendices}

\bibliography{sn-bibliography}% common bib file
%% if required, the content of .bbl file can be included here once bbl is generated
%%\input sn-article.bbl

\end{document}